\tikzstyle{block1} = [rectangle, draw, fill=blue!20, 
\tikzstyle{block2} = [rectangle, draw, fill=red!20, 
\tikzstyle{line} = [draw, -latex']
\newcommand\subsetsim{\mathrel{%
  \ooalign{\raise0.2ex\hbox{$\subset$}\cr\hidewidth\raise-0.8ex\hbox{\scalebox{0.9}{$\sim$}}\hidewidth\cr}}}
  \newcommand\supsetsim{\mathrel{%
  \ooalign{\raise0.2ex\hbox{$\supset$}\cr\hidewidth\raise-0.8ex\hbox{\scalebox{0.9}{$\sim$}}\hidewidth\cr}}}
\def\RR{{\mathbb{R}}}
\def\ZZ{{\mathbb{Z}}}
\def\diam{\mathrm{diam}\,}
\def\orad{{\texttt{rad}}}
\def\parent{{\uparrow \hspace{-.1em}}} %
\def\grandparent{{\upuparrows \hspace{-.1em}}}
\DeclareMathOperator*{\argmin}{argmin}
\DeclareMathOperator*{\rank}{rank}
\theoremstyle{plain}
\newtheorem{theorem}{Theorem}[section]
\newtheorem*{example}{Example}
\newtheorem{assumption}{A}
\declaretheorem[name=Lemma, style=plain]{lemma}
\declaretheorem[name=Definition, style=definition, sibling=lemma]{definition}
\DeclareMathOperator{\ran}{ran}
\newcommand\vspan{\mathrm{span}}
\author{Bernhard~Brehm\\
Hanne~Hardering}
\title{Sparips}
\date{\today}
\begin{document}
\maketitle
%
%
%
%
\abstract{
Persistent homology of the Rips filtration allows to track topological features of a point cloud over scales, and is a foundational tool of topological data analysis. 
Unfortunately, the Rips-filtration is exponentially sized, when considered as a filtered simplicial complex. Hence, the computation of full persistence modules is impossible for all but the tiniest of datasets; when truncating the dimension of topological features, the situation becomes slightly less intractable, but still daunting for medium-sized datasets.

It is theoretically possible to approximate the Rips-filtration by a much smaller and sparser, linear-sized simplicial complexs, however, possibly due to the complexity of existing approaches, we are not aware of any existing implementation.

We propose a different sparsification scheme, based on cover-trees, that is easy to implement, while giving similar guarantees on the computational scaling. We further propose a visualization that is adapted to approximate persistence diagrams, by incorporating a variant of error bars and keeping track of all approximation guarantees, explicitly.
}

\tableofcontents
\section{Introduction}
In topological data analysis a common goal is to extract topological information from a finite set of sample points $X=\{x_0,\ldots,x_n\}$ with a symmetric distance function $d:X\times X\to [0,\infty]$. Often one assumes that $d$ is a metric; that is, that $0<d(x,y)<\infty$ for each $x\neq y$, and that the triangle inequality holds. While the sample points can typically be considered as elements of some $\RR^k$, we will instead consider $(X,d)$ as a weighted graph, or a finite metric space (if $d$ happens to be a metric).

Naively, a finite topological space would just be a finite set of disconnected points.
A widely used tool to add topological structure to such a weighted graph $(X,d)$
is a family of abstract simplicial complexes called Rips-filtration. For every $r\ge 0$ it is given by $V(r)=\{M\subseteq X: \mathrm{diam}\,M< r\}\subseteq 2^X$, and has the property that $V(r)\subseteq V(s)$ for $r\le s$.
This complex is, by construction, a flag complex; that is, each simplex $\Sigma\subseteq 2^X$ is included in $V(r)$ if and only if every edge $(x,y)\subseteq \Sigma$ is included in $V(r)$. Alternatively, we can consider it as the clique-complex of the graph with all edges with weight less than $r$, where edges with weight $d(x,y)=\infty$ are missing. In an abuse of notation, we will not distinguish between finite flag-complexes and graphs, nor between filtered flag-complexes and weighted graphs.

Note that we could have permitted equality in the definition of $V(r)$ instead; which intervals contain which of their endpoints is an endless source of subtleties that are mostly irrelevant in practice. We encourage readers to ignore these subtleties, but will strive to get them right.

In order to obtain information about the topology of the underlying metric space, one studies the
\emph{persistence module}. This is a finite collection of finite-dimensional vector spaces $\{V_s\}_s$ over a field $\mathbb F$ with commuting linear maps $\phi^{t,s}:V_s\to V_t$ for $s\le t$, over a totally ordered set.
In particular, we consider the homology vector spaces $H(r)=\left( H_k(\mathbb{F},V(r)) \right)_{k=0,1,\ldots}$ that partially describe the topology of $X$ at scale $r$ and dimension $k$.
Since $V(r)\subseteq V(s)$ for $r\le s$, the embeddings induce linear maps $\phi^{t,s}=\iota_*^{t,s}:H(s)\to H(t)$.
An element in the range $\xi\in \ran \iota_*^{t,s}$ corresponds to a topological feature that persists (at least) from scale $s$ to $t$. It is intuitively obvious that long-lived features ($s\ll t$), are more interesting for understanding the structure of $X$ than short-lived features $(s\approx t)$.
Even though the collection $\{V(r):r\in\RR\}$ is formally infinite, it is ``finite in practice'', as for finite weighted graphs it decomposes into a finite number of intervals with identical complexes in each interval.
Most of our theory will concern such ``practically finite'' collections of finite-dimensional vector spaces $\{H(r)\}$.
For a survey of persistent homology see \cite{carlsson2009topology} and the references therein.

The computation of the persistence diagram typically scales badly with the number $N=|X|$ of sample points: The distance matrix will have $N(N-1)/2 \sim N^2$ entries, each of which needs to be computed and potentially stored in memory. The number of simplices of dimension $k$ scales like $N^{k+1}$ for $k\ll N$, and there are $2^N$ simplices in total.

A popular strategy to reduce the number of simplices is subsampling and thresholding: We first do a subsampling step, i.e., we take a subset of sample points that is dense enough in $X$ with respect to some parameter $\epsilon$, and then consider only the corresponding persistence diagram in the range $r\in [0,T]$ for some threshold $T$. In order to obtain proper performance, the scale quotient $Q=T/\epsilon$ needs to be bounded.

The procedure of subsampling and thresholding allows to resolve certain scales, slightly smaller than $T$, in the persistence diagram relatively precisely, up to a relative error depending on $Q$, without needing too high numbers of simplices. It does not allow us to resolve relatively long-lasting features. That is, we will find these features but cannot find out how long they last; we cannot use this approach to find out whether features present at $t_1 \ll t_2$ are actually the same. 
In this work, we address these shortcomings by the construction of a sparsified approximate Rips-complex.
Compared to the general pipeline for topological data analysis, our approach looks somewhat like the following:

\bigskip
\begin{centering}
	\begin{tikzpicture}[node distance = 5em, auto]
	\node [block1] (data1) {Raw Data};
	\node [block1, below of=data1] (sample) {Selected Subsample};
	\node [block1, below of=sample] (dist1) {Distance Matrix};
	\node [block1, below of=dist1] (module1) {Persistence Module};
	\node [block1, below of=module1] (intepret1) {Interpretation};
	\node [block1, right= 5em of data1] (data2) {Raw Data};
	\node [block2, below of=data2] (tree) {Contraction Tree};
	\node [block2, below of=tree] (dist2) {Sparse Distance Matrix};
	\node [block1, below of=dist2] (module2) {Approximate Persistence Module};
	\node [block1, below of=module2] (intepret2) {Interpretation};
	\path [line] (data1) -- (sample);
	\path [line] (sample) -- (dist1);
	\path [line] (dist1) -- (module1);
	\path [line] (module1) -- (intepret1);
	\path [line] (data2) -- (tree);
	\path [line] (tree) -- (dist2);
	\path [line] (dist2) -- (module2);
	\path [line] (module2) -- (intepret2);
	\node [below left= 1.5em and -4.5em of data1, align=right, anchor=east] (subsam)  {subsampling};
	\node [below left= 1.5em and -4.5em of sample, align=right, anchor=east] (distcomp) {evaluate ($O(N^2)$)};
	\node [below left= 1.5em and -4.5em of dist1, align=right, anchor=east] (assemble) {assemble and solve \\ (e.g. via \texttt{ripser})};
	\node [below left= 1.5em and -4.5em of module1, align=right, anchor=east] (visual) {e.g. visualization};
	\node [below right= 1.5em and -4.5em of data2, align=left, anchor=west] {reorganize \\ ($O(N\log N)$)};
	\node [below right= 1.5em and -4.5em of tree, align=left, anchor=west] {evaluate ($O(\epsilon^{-c}N)$)};
	\node [below right= 1.5em and -4.5em of dist2, align=left, anchor=west] {assemble and solve \\ (e.g. via \texttt{ripser})};
	\node [below right= 1.3em and -4.5em of module2, align=left, anchor=west] {e.g. visualization};
	\end{tikzpicture}
\end{centering}

%
The groundbreaking work of D. Sheehy \cite{sheehy2013linear} proposes one way of obtaining a sparsified complex for metric spaces (weighted graphs that fulfill the triangle inequality):
One constructs a length function $\ell$ that is $\infty$ for most pairs of points, and is therefore sparse (and obviously breaks the triangle inequality). This gives rise to a Rips-complex that is multiplicatively interleaved with the original one. This sparsified complex has, for large finite spaces $X$ of bounded doubling dimension $d$, asymptotically only linearly many simplices in $N=|X|$ (with, just like in subsampling and thresholding, bad scaling in the other parameters).
Further, it is shown in \cite{sheehy2013linear} that the sparsified complex can be constructed in asymptotic $N\log N$ time (and especially only $N\log N$ many evaluations of the distance function). \cite{sheehy2013linear} does not consider the interplay with subsampling.

From this, and the publication date of \cite{sheehy2013linear}, one might assume that our work is already done, and sparsified complexes and relative-error approximations are today standard in the field of applied topological data analysis. This is not the case, see e.g.\cite{otter2017roadmap}. There exists, to the best of our knowledge, no implementation of \cite{sheehy2013linear}, at all. 
We believe that this is in part due to the construction in \cite{sheehy2013linear} being excessively complicated.

In this paper we will develop and benchmark a \emph{practical} sparsification scheme. We made great efforts to ensure a construction that is easy to implement in a self-contained way, and easy to understand and modify. On the other hand, it is not optimized for provable asymptotic worst-case complexity, nor for clever time-saving tricks. However, the computational resources spent on sparsification are negligible compared to the computation of persistent homology of the sparsified complex, and many sample sets will be subsampled to a much smaller subset anyway. 

We implemented this sparsification scheme in the \texttt{julia} language, and will make it avaiable soon. 
We believe that our scheme should outperform the proposed schemes \cite{sheehy2013linear} in practice, but we cannot put this to the test, lacking any reference implementation of alternative approaches.

\section{Persistence modules and diagrams}

The persistence module of the Rips complexes $V(c)=\{M\subseteq X: \mathrm{diam}\,M< c\}$ is given by
\begin{align*}
H(c_{0}) \xrightarrow[]{\phi^{c_1,c_0}} H(c_1) \xrightarrow[]{\phi^{c_2,c_1}} H(c_2)  \xrightarrow[]{} \ldots ,
\end{align*}
where $H(c)=\{H_{k}(c)\}_{k=0,1,\ldots}$ consists of the homology vector spaces $H_{k}(c)=H_k(\mathbb{F},V(c))$, and $\phi^{t,s}=\iota_*^{t,s}$ is given by the embeddings $\iota^{t,s}: V(s)\hookrightarrow V(t)$ for $s\leq t$.

The following well-known classification theorem for persistence modules is the foundation of most of the field of topological data analysis: 

\begin{theorem}\label{T:classification}
	Let $\{V(c)\}_{c\in\mathbb{Z}}$ be a sequence of finite-dimensional $\mathbb{F}$ vector spaces,
	such that only finitely many $V(c)\neq \{0\}$, and let 
	$\phi^{c+1,c}: V(c)\to V(c+1)$ be linear maps.
	We set $\phi^{d,b}=\phi^{d,d-1}\circ\ldots\circ \phi^{b+1,b}$ for $b<d$, and $\phi^{b,b}=\mathrm{id}$. 
	
	Then there exists a collection of numbers $N_{d,b}\in\{0,1,\ldots\}$ and vectors $\xi^{d,b}_{c,i}\in V(d)$, for $b< c\le d$ and $i=1,\ldots, N_{d,b}$ such that:
	\begin{enumerate}
		\item for $b< c \le c' \le d$, we have $\phi^{c',c}\xi^{d,b}_{c,i}=\xi^{d,b}_{c',i}$, and
		\item for $b< c \le d <c' $, we have $\phi^{c',c}\xi^{d,b}_{c,i}=0$, and 
		\item the collection $\{\xi^{d,b'}_{c,i}:\, b'< b,\,d\ge c,\,i=1,\ldots,N_{d,b'}\}$ forms a basis for $\ran\phi^{c,b}$.
	\end{enumerate}
\end{theorem}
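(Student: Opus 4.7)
The plan is to recast the sequence as a graded module over the polynomial ring $\mathbb{F}[t]$, apply the classification of finitely generated modules over this graded principal ideal domain, and then read the vectors $\xi^{d,b}_{c,i}$ off the cyclic summands of the resulting decomposition.

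Concretely, I would form the $\mathbb{Z}$-graded module $M = \bigoplus_{c \in \mathbb{Z}} V(c)$ over $R = \mathbb{F}[t]$ with $\deg t = 1$, where $t$ acts on the homogeneous piece $V(c)$ by $\phi^{c+1,c}$. The hypothesis that only finitely many $V(c)$ are nonzero and that each is finite-dimensional guarantees that $M$ is a finitely generated torsion $R$-module, so the graded structure theorem yields a decomposition of $M$ into cyclic summands, each of the form $R/(t^{d-b})$ with the cyclic generator placed in degree $b+1$. Such a summand has a one-dimensional homogeneous piece in each degree $c$ with $b+1 \le c \le d$ and vanishes elsewhere; let $N_{d,b}$ be the multiplicity of the isomorphism class indexed by $(b,d)$ in the decomposition.

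In the $i$-th copy of the summand of type $(b,d)$ pick a homogeneous generator $g^{d,b}_i \in V(b+1)$, and define $\xi^{d,b}_{c,i} = \phi^{c,b+1}(g^{d,b}_i)$ for $b < c \le d$. Property (1) is immediate from the identity $\phi^{c',c}\circ \phi^{c,b+1}=\phi^{c',b+1}$. Property (2) follows because $t^{d-b}$ annihilates $g^{d,b}_i$, so $\phi^{d+1,b+1}(g^{d,b}_i)=0$, and hence $\phi^{c',c}\xi^{d,b}_{c,i}=0$ for every $c'>d$.

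I expect the main bookkeeping effort to concentrate on property (3). The point is that $\ran \phi^{c,b}$ decomposes along the chosen summands as the direct sum of the images of their $V(b)$-contributions in $V(c)$. A summand of type $(b',d)$ has a nonzero contribution in $V(b)$ precisely when $b' < b \le d$, and the further application of $\phi^{c,b}$ to that contribution is nonzero precisely when additionally $d \ge c$; on each such summand the image in $V(c)$ is the one-dimensional space spanned by $\xi^{d,b'}_{c,i}$. This exactly reproduces the index set $b' < b$, $d\ge c$, $i=1,\ldots,N_{d,b'}$ from the theorem, and linear independence across summands is automatic from the direct-sum decomposition. The only subtlety is to correctly exclude summands that are already dead in $V(c)$ even though they are alive in $V(b)$; once this is tracked, the remainder is a coordinate check one summand at a time.
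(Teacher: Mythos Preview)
Your argument is correct and follows the standard Zomorodian--Carlsson route: package the sequence as a finitely generated graded torsion module over $\mathbb{F}[t]$, invoke the graded structure theorem, and unwind each cyclic summand $R(-(b+1))/(t^{d-b})$ into the interval $(b,d]$. The verification of property~(3) you sketch is sound once you note that $b\le c$ makes the constraint $b\le d$ redundant given $d\ge c$.

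The paper, by contrast, avoids the structure theorem entirely and gives a bare-hands construction: it iterates over birth times $b$ in increasing order, and for each $b$ over death times $d$ in increasing order, greedily picking vectors in $\ker\phi^{d,b+1}\subseteq V(b+1)$ outside the span of what has already been collected, then propagating them forward. The independence check is done inductively by observing that a new dependency at level $c\le d$ would force a vector of $\ker\phi^{c,b+1}$ to lie outside the current span, contradicting the completion of the earlier $d$-stage. Your approach is shorter if one is willing to quote the graded classification as a black box and is closer to how the result is usually cited in the TDA literature; the paper's approach is elementary and self-contained, in keeping with its stated aim, and has the side benefit of yielding an explicit algorithm for producing the basis rather than an existence statement.
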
	
For the sake of self-containedness, we give a proof of the classification theorem in Appendix \ref{app:classify}; alternatively see the references in \cite{carlsson2009topology}.

In the notation of the theorem, $(b,d]$ is called the life-time of a vector $\xi^{d,b}_{c,i}$; it is said to persist from birth (exclusive, before) $b$ to death (inclusive, after) $d$. A death at $d=\infty$ formally corresponds to vectors that are never in the kernel (for metric spaces this will be a single vector in $H_0$, corresponding to the fact that we always have a connected component). Likewise, a birth at $b=-\infty$ formally corresponds to vectors that are present in the very first space; for metric spaces, we get one such vector in $H_0$ for each distinct point. 

Persistence modules are (up to changes of coordinates) determined by the ranks  $r^{t,s}=\rank \phi^{t,s} = \dim \ran \phi^{t,s}$ for $s\le t$, or, equivalently, by the numbers $N_{d,b}$, as $r^{t,s} = \sum_{b<s\le t \le d}N_{d,b}$, and $N_{t,s}=r^{t,s+1}-r^{t+1,s+1} - (r^{t,s}-r^{t,s}-r^{t+1,s}) $.

Our persistence module $H(r)=H(V(r))$ is parametrized over $r\in[0,\infty)$, and not over $c\in\ZZ$; however, since we only consider finite metric spaces $X$, we can parametrize over the finite set $\{d(x,y):\,x,y\in X \}$ after sorting it. In an abuse of notation, we will do so implicitly.

In the following we will discuss the visualization of persistence modules, what we mean by approximation, and the need of approximation due to computational scaling.

\subsection{Persistence diagrams}

There are two popular visualizations of persistence modules: First, in \emph{bar codes}, one collects the intervals $(b,d]$ and draws them, annotated with the dimension $k$, and possibly a representative (in the quotient space $H=\ker \partial / \ran \partial$).
If there are many bars stretching over the same interval, one typically draws only a single bar and annotates it with its multiplicity.
In order to understand $H(V(r))$, one then collects all bars containing $r$; in order to understand $\ran\phi^{t,s}$, one collects all bars whose interval is a superset of $[s,t]$. Note that by construction, a bar $(s,t]$ is not a superset of $[s,t]$.

The second popular visualization is as a \emph{persistence diagram}. Instead of drawing bars, i.e.~intervals $(b,d]$, we draw a dot $(b,d)$ in the $\RR^2$-plane. We can read off $\ran\phi^{t,s}$ by collecting all dots $(b,d)$ lying strictly to the left ($b<s$) and upwards $d\ge t$ of the point $(s,t)$. We call this diagram $P_V$.

We can merge these two visualizations by drawing $(b,d]\times \{d\}$ for each basis vector; that is, for each dot in the persistence diagram, we draw a straight line to the right until we hit the diagonal; or, alternatively, we stack the bars, drawn horizontally, using the death $d$ as $y$-coordinate.

\begin{example}[The unit circle]\label{example:s1}
	As a simple example, consider the unit circle $\RR/\ZZ$ with the standard distance and $X_{32}\subset \RR/\ZZ$ as $32$ equispaced sample points. We can describe the persistent homology in terms of bar-codes or as a persistence diagram; for this graphic, we merged both representations, and computed persistent homology up to dimension $10$, c.f.~Figure \ref{fig:s1_32}. From this, we can read off that, e.g.~for scales $r\in(1/32,11/32)$, all embeddings induce isomorphisms on homology, which is given by one dimension in $H_0$ (single connected component), and one dimension in $H_1$ (we roughly have an $S^1$).
	
	It is possible to define a notion of persistent homology of Rips complexes for infinite metric spaces, like $\RR/\ZZ$, and compute such persistence diagrams. We refer to \cite{adamaszek2017vietoris} for the details; we plotted the persistence diagram of the actual sphere $S^1=\RR/\ZZ$ in Figure \ref{fig:s1_real}.

	\begin{figure}[hbpt]
		\centering
		\begin{subfigure}[b]{0.55\textwidth}
			\centering
			\includegraphics[width=\textwidth]{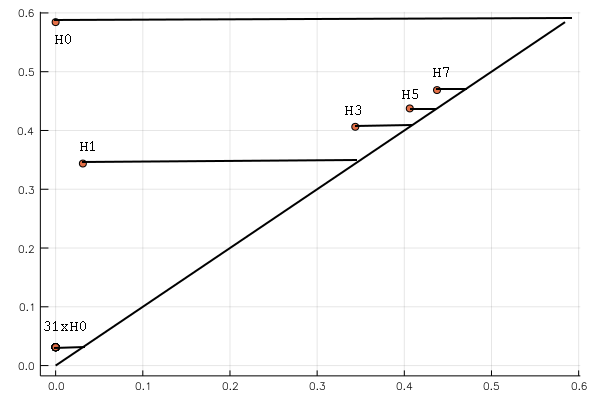}
			\caption{Persistence diagram of $32$ points on $S^1$}\label{fig:s1_32}
		\end{subfigure}%
		\begin{subfigure}[b]{0.55\textwidth}
			\centering
			\includegraphics[width=\textwidth]{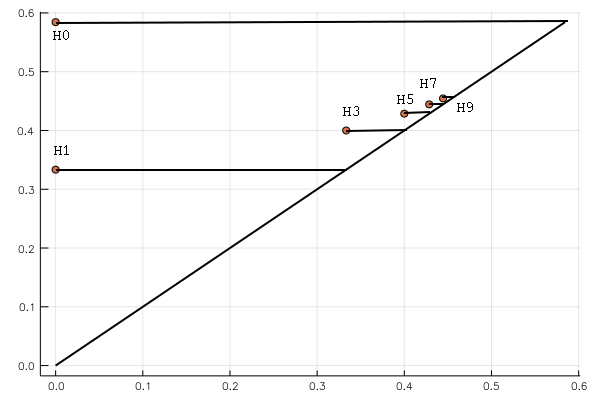}
			\caption{Persistence diagram of $S^1$}\label{fig:clockDiagram}
		\end{subfigure}
		\caption{Persistent homology of the unit circle, up to dimension $10$}\label{fig:s1_real}
	\end{figure}
\end{example}

\subsection{Interleavings and Gromov Hausdorff distance}\label{sec:interleaving-gromov}
The notion of approximation of persistence modules is that of interleavings:

\begin{definition}\label{def:interleave}
	Suppose we have given two persistence modules of $\mathbb F$-vector spaces $V=V(r)$ and $W=W(r)$, and a nondecreasing function $\psi:\RR\to\RR$, with $\psi(r)\ge r$.
	A \emph{$\psi$-interleaving} is a collection of linear maps $\phi_{W,V}^{t,s}$ for $s\le t$ and $\phi_{V,W}^{t,s}$ for $\psi(s)\le t$, such that everything commutes (i.e.~also with $\phi_{V,V}^{t,s}$ and $\phi_{W,W}^{t,s}$ for $s\le t$).
	\begin{equation}\label{diag:interleave}\begin{gathered}
	\xymatrix{
		V(r) \ar[r]^{\phi_{V,V}^{s,r}} &  V(s) \ar[rd]^{\phi_{W,V}^{\psi(s),s}} &\\
		W(r) \ar[u]^{\phi_{V,W}^{r,r}} \ar[rr]^{\phi^{\psi(s),r}_{W,W}}  &&  W(\psi(s))
	}\qquad
	\xymatrix{
		V(r) \ar[dr]_{\phi_{W,V}^{\psi(r),r}} \ar[rr]^{\phi_{V,V}^{s,r}} &&  V(s)\\
		& W(\psi(r)) \ar[r]^{\phi_{W,W}^{s,\psi(r)}} &  W(s) \ar[u]_{\phi_{V,W}^{s,s}}
	}  \end{gathered}
	\end{equation}
	We then say that $W$ is $\psi$-interleaved into $V$ (equivalently: $V$ is $\psi^{-1}$ interleaved into $W$).
\end{definition}

A way of obtaining an interleaving for Rips complexes is by perturbing the distance function $d$: If we have some $\ell:X^2\to \RR_+$ such that $d(x,y)\le \ell(x,y)\le \psi(d(x,y))$ for all $x,y$ then $V_\ell(r)\subseteq V_d(r)\subseteq V_\ell(\psi(r))$ for all $r$. Applying the homology functor turns this interleaving of simplicial flag complexes with embeddings into an interleaving of vector spaces with linear maps.

As a more interesting example, consider a finite metric space $X$, and an $\epsilon/2$-dense  subset $X_\epsilon\subseteq X$, i.e., we have a projection $\pi:X\to X_{\epsilon}$ with $d(x,\pi x)\le \epsilon/2$ for all $x\in X$. Let $V$ denote the Rips-complex of $X$, and $V_{\epsilon}$ denote the Rips-complex of $X_{\epsilon}$. We clearly have $V_\epsilon(r)\subseteq V(r)$ for all $r$; however, we cannot have $V(r)\subseteq V_{\epsilon}(R)$, for any $R$, as an inclusion of simplicial complexes; there simply are not enough points in $X_{\epsilon}$ if $X_\epsilon \subsetneq X$.

This can be remedied by ``multiplexing points''; that is, by considering $\overline V_\epsilon(r)=\{M\subseteq X:\, \diam\pi(M)<r\}$. By the triangle inequality, $d(x,y)\ge d(\pi x,\pi y)-d(x,\pi x) - d(y,\pi y)\ge d(\pi x,\pi y) -\epsilon$, yielding $V_\epsilon(r)\subseteq V(r)\subseteq \overline V_{\epsilon}(r+\epsilon)$, for $r\ge 0$. 

Intuitively, $V_\epsilon(r)$ and $\overline V_{\epsilon}(r)$ should describe the same space, and hence $V_\epsilon(r)\subseteq V(r)\subseteq \overline V_{\epsilon}(r+\epsilon)\simeq V_{\epsilon}(r+\epsilon)$, which gives an interleaving $V_\epsilon(r)\subsetsim V(r)\subsetsim V_{\epsilon}(r+\epsilon)$. The topological basis for this is the following:
\begin{definition}
	Let $K\subseteq L$ be flag-complexes, and let $\pi:L_0\to K_0$ be a projection of the underlying point set. We say that this pair is a \emph{single step deformation} if $\pi$ is \emph{simplicial}, i.e.,~$\{\pi(x),\pi(y)\}\in K$ whenever $\{x,y\}\in L$ (this means that $\pi:L\to K$ is well-defined) and \emph{contiguous to the identity}, i.e.,~$\{x,\pi(x)\}\in K$ for all $\{x\}\in L$, i.e.,~$x\in L_0$.
	
	One says that $K$ and $L$ are \emph{simple homotopy equivalent}, if there exists a sequence of single-step deformations connecting $K$ and $L$.  If $K\subseteq L$, by a sequence of increasing single step deformations, then we write $K\subsetsim L$. 
\end{definition}
Simple homotopy equivalent complexes are homotopy equivalent, when considered as topological spaces of formal convex combinations: This is because single step deformations are homotopy equivalences, by $\pi\circ\iota=\mathrm{id}_K$ and $h$ with $h(0)=\mathrm{id}_L$ and $h(1)=\pi=\iota\circ\pi$ defined by
\[h(t)\left[\sum_i \lambda_i x_i\right] = (1-t)\left[\sum_i \lambda_i x_i\right] + t \left[\sum_i \lambda_i \pi(x_i)\right].\]

Considering the definition of $\overline V$, we see that $V_\epsilon(r)\subsetsim\overline V_{\epsilon}(r)$; and, since homology is invariant under homotopy equivalence, we really obtain an interleaving in homology. 

Using $\epsilon/2$-dense subsets, we naturally obtain an interleaving for 
two finite metric spaces $X$ and $Y$ that have Gromov-Hausdorff distance bounded by $\varepsilon_1+\varepsilon_2$. That is, we assume that we can extend the metric $d$ from $X^2\cup Y^2$ to $(X\times Y)^2$, preserving the triangle inequality, such that $X\subseteq X\cup Y$ is $\varepsilon_1/2$ dense and $Y\subseteq X\cup Y$ is $\varepsilon_2/2$-dense.
Then we obtain an interleaving $V_X(r)\subseteq V_{X\cup Y}(r) \subsetsim V_Y(r+\varepsilon_2)\subseteq V_{X\cup Y}(r+\varepsilon_2)\subsetsim V_{X}(r+\varepsilon_1 +\varepsilon_2)$.

The interleaving is even slightly tighter than this by the following observation:
If $Y$ is $\psi$-interleaved into $X$, modulo simple homotopy equivalence, and $R=\mathrm{diam}\,X = \max\{d(x,x'):\, x,x'\in X\}$, then $Y$ is also $\psi_{R}=\max(R, \psi(r))$-interleaved into $X$, as $V_X(R)$ is contractible.

Consider the example \ref{example:s1} of $X_{32}\subset S^1=\RR/\ZZ$. The set $X_{32}$ is $\frac{1}{64}$-dense. We therefore obtain a $\psi(r)=\max(1/2, r+1/32)$-interleaving, from any finite $X_{32}\subseteq X\subseteq S^1$ into $X_{32}$, modulo simple homotopy. In this sense, the persistence homology of $X_{32}$ approximates, up to $\psi$-interleaving, the persistent homology of $S^1$.

\subsection{Approximate persistence diagrams}\label{sec:approx-pers} Now, suppose that $W$ is $\psi$-interleaved into $V$. Consider the corresponding commutative diagrams \eqref{diag:interleave}.
Given some feature $\zeta\in \ran\phi_{W,W}^{\psi(s),r}$ persisting in $W$ from $r$ to $\psi(s)$, with $r\le s$, this feature must also persist in $V$ from $r$ to $s$: Consider the left diagram in \eqref{diag:interleave}. Take a preimage $\tilde \zeta\in W(r)$, and consider $\tilde\xi=\phi^{r,r}_{V,W}\tilde\zeta$:
\[0\neq\zeta=\phi^{\psi(s), r}_{W,W}\tilde \zeta = 
\phi^{\psi(s),s}_{W,V}\phi^{s,r}_{V,V}\phi^{r,r}_{V,W}\tilde \zeta = \phi^{\psi(s),s}_{W,V}\phi^{s,r}_{V,V}\tilde\xi.\]

Likewise, a feature $\xi\in\ran\phi^{s,r}_{V,V}$ persisting in $V$ from $r$ to $s$ with $\psi(r)\le s$ must persist in $W$ from $\psi(r)$ until $s$.
In terms of ranks, we must have $\rank \phi_{W,W}^{\psi(s),r}\le \rank \phi_{V,V}^{s,r}$ and $\rank \phi_{V,V}^{s,r}\le \rank \phi_{W,W}^{s,\psi(r)}$.

In other words: The set of basis features (from the normal form, i.e.,~the $\xi_{r,i}^{d,b}$) persisting in $W$ from $r$ until $\psi(s)$ form a linearly independent subset of the space of features persisting in $V$ from $r$ until $s$. If $\psi(r)\le s$, then the set of basis features persisting in $W$ from $\psi(r)$ until $s$ span the space of features persisting in $V$ from $r$ until $s$.

In order to visualize this information in the persistence diagram, we draw for each basis-vector $\zeta^{d,b}_{d,i}$ in $W$ a ``bar-with-errors'', i.e. a rectangle $(b,\psi(b)]\times (d,\psi(d)]$, instead of the dot $(b,d)$.
A lower bound on $\rank \phi_{V,V}^{t,s}$ can then be read of by counting all rectangles, that are contained in the upper left corner $[-\infty,t]\times (s,\infty]$.
If $t\le \psi(s)$, we can read off an upper bound by counting all rectangles that intersect the upper left corner $[-\infty,t]\times (s,\infty]$.
For an example, see Figure \ref{fig:24h-pers-approx}.

\begin{figure}[hb]
	\centering
	\includegraphics[width=0.8\textwidth]{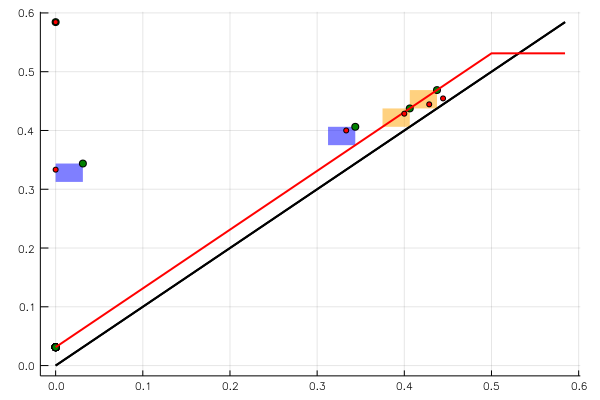}
	\caption{Approximate persistence diagram of 32 points on $S^1$. We plotted the identity in black and $\psi$ in red. We used large green markers for persistence of $X_{32}$, and small red markers for persistence of $\RR/\ZZ$. The shaded rectangles correspond to the error bars of the interleaving. We color-coded the rectangles: Blue rectangles correspond to ``real'' features of $X_{32}$, that must be matched to some feature of $\RR/\ZZ$; orange rectangles may or may not be matched.}\label{fig:24h-pers-approx}
\end{figure}

If we additionally draw the graph of $\psi$, we can partially match rectangles from the persistence diagram of $W$ with dots from the persistence diagram of $V$. This means, we 
\begin{enumerate}
	\item match dots only to rectangles in which they are contained,
	\item the matching may leave some rectangles with upper right corner below the graph of $\psi$ unmatched, but not others,
	\item and the matching may leave some dots below the graph of $\psi$ unmatched, but no others.
\end{enumerate}
This partial matching can be formalized as the \textit{Interleaving Theorem} (see Appendix \ref{app:interleave}). Note that the theorem uses a more 'symmetric' definition of interleavings than Definition~\ref{def:interleave}: Instead of demanding 
$W(r)\subseteq V(r)\subseteq W(\psi(r))$, interleavings are defined with respect to two maps $\psi_{1}$ and $\psi_{2}$ with $W(r)\subseteq V(\psi_1(r))$ and $V(r)\subseteq W(\psi_2(r))$.
This more general case can be again reduced to an interleaving in the sense of Definition~\ref{def:interleave} by defining $W_{\psi_1}(r)\colonequals W(\psi_1^{-1}(r))$ and $\psi\colonequals \psi_1\circ\psi_2$, yielding $W_{\psi_1}(r)\subseteq V(r)\subseteq W_{\psi_1}(\psi(r))$.
For the sake of self-containedness, we give a proof of the interleaving theorem in Appendix \ref{app:interleave}; alternatively see the references in \cite{carlsson2009topology}. The approximate visualization is, to the best of our knowledge, original to this work.

\FloatBarrier

\subsection{Non-sparse computational scaling}
Let us consider how the computation of a full persistence diagram scales with the number $N=|X|$ of sample points in a metric space.
Usually, algorithms will need to have access to the full distance matrix, to enumerate simplices, and to compute the Smith normal form of the boundary matrix (which has simplices as rows and columns).

Firstly, the distance matrix will have $N(N-1)/2 \sim N^2$ entries, each of which needs to be computed and potentially stored in memory. This precludes any ``big data'' applications.

Next, we count the number of simplices: We will have $\binom{n}{k+1}\sim N^{k+1}$ many simplices of dimension $k$ for $k\ll N$, and $2^N$ many simplices in total. Any algorithm that needs to enumerate simplices will become infeasible for computing a full exact persistence diagram, once we have more than 40 sample points. 
This can be somewhat alleviated by instituting a dimension cut-off at some small $k$, bringing the number of $k$-simplices down from exponential to a polynomial $\sim N^{k+1}$. Many algorithms, however, will actually need to enumerate the $k+1$-simplices in order to compute $k$th homology. 

As the linear algebra of how to compute homology and persistence of simplicial complexes is not the focus of this work, we just mention that this scales worst-case in the third power in the number of simplices, in practice however often only linearly.
All homology computations in this paper were done by \texttt{ripser} \cite{bauer2017ripser}, which is a state of the art software for computing persistent (co-)homology over fields $\ZZ_p$ for small primes $p$, given a distance matrix (which does not need to fulfill the triangle inequality).

From these considerations, we can see why the exact computation of persistence diagrams is limited to relatively small data-sets and dimensions. For a survey of current software, see \cite{otter2017roadmap}.

\subsection{Computational scaling of sparsified complexes}\label{sec:sparsi-scale}
As the main bulk of computational scaling is not related to the realm of linear algebra, it is necessary to reduce the number of simplices.
All three sparsification strategies mentioned in this work ---subsampling and thresholding, \cite{sheehy2013linear}, and the strategy introduced in this paper--- exhibit a linear asymptotic scaling of the number of simplices. In pratice we generally observe a speed up as a result of the sparsification, although the asymptotic regime of linear scaling is usually only entered for very small dimensions. This drawback is due to exponential scaling of the constants in the desired precision.

Let us first recall subsampling and thresholding. One fixes a lower bound $\epsilon>0$, and chooses a subset $X_\epsilon\subseteq X$ of sample points with a projection  $\pi:X\to X_\epsilon$, such that $d(x, \pi x)\le \epsilon$ holds for all $x\in X$.
We say that this subsample has \emph{density} $\rho$ if $d(x,y)\ge \rho^{-1}\epsilon$ for all $x\neq y$ in $X_\epsilon$. 
One then thresholds the resulting space by fixing a $T>\epsilon$ and only considering the persistence diagram in the range $r\in [0,T]$.
This means edges $(x,y)$ that have $d(x,y)\ge T$ are taken out of the considerations.
Let $R=\max\{d(x,y): x,y\in X\}$ be the diameter of the space. We have seen in Section \ref{sec:interleaving-gromov} that $V(X_{\epsilon},T)$ is then $\psi$-interleaved into $V(X)$ with
\[\psi(r) = \left\{\begin{aligned}
r+2\epsilon &&\text{if } r\le T-2\epsilon\\
R &&\text{otherwise}.
\end{aligned}\right.\]
It is instructive to rewrite this as a relative error bound: Let $Q=T/\epsilon$ be the quotient of scales; the lowest nontrivial relative error $\epsilon_1(r)=(\psi(r)-r)/r$ is obtained at $r=T-2\epsilon$, for an error of $\epsilon_1^*=2\epsilon/(T-2\epsilon) = 2 Q^{-1} (1-2Q^{-1})^{-1}\approx 2 Q^{-1}$ for relatively large $Q$. 

In order to count simplices, we will need to make some assumptions on the metric space $X$. One standard assumption is the following: 
\begin{assumption}\label{A:intrinsicD}
	The metric space $X$ has bounded intrinsic dimension $\log_2(C)$, i.e., we can place at most $C$ distinct points with pairwise distances bounded below by $r/2$ in the ball $B_r(x)$, for every $x\in X$ and $r>0$.
\end{assumption}
It is clear that all subsets of $\RR^n$ have bounded intrinsic dimension, and that this property is invariant under isometries.

If we assume A~\ref{A:intrinsicD}, we can estimate the number of simplices in $V(X_{\epsilon}, T)$: Each edge must have $\epsilon \rho^{-1}\le d(x,y)\le T$. Assume that $Q=T/\epsilon < \rho^{-1}2^L$; then, each point can have at most $C^L$ neighbors (in the $V(X_{\epsilon}, T)$ considered as a graph).
Therefore, the number of $k$-simplices is bounded above by 
\begin{align*}
N \binom{C^{L}}{k}\leq N C^{Lk}\approx N \left(Q\rho\right)^{k\log_2C}\approx N \left(\frac{2\rho} {\epsilon_1^{*}}\right)^{k\log_2C}.
\end{align*}
From this simple count, we can infer linear scaling in the number of points, albeit polynomially bad scaling in the relative precision, and
exponentially bad scaling in the dimension of simplices and the intrinsic dimension of the dataset.

Note that in subsampling and threshholding the relative precision $\epsilon^{\star}_{1}$ is only obtained close to the threshhold at $r=T-2\epsilon$. Our own approach is to remedy this shortcoming while retaining a similar number of simplices.

For this we prescribe an arbitrary interleaving $\psi$. Usually, this will have the form 
\[\psi(r)=\min(R, r+\max(\epsilon_0, \epsilon_1 r))\]
with absolute precision $\epsilon_{0}$ and relative precision $\epsilon_{1}$.
We can relate $\epsilon_{1}$ to a precision parameter $Q>1$ via the formula $Q=2+2\epsilon_1^{-1}$, or $\epsilon_1=2Q^{-1}(1-2Q^{-1})^{-1}$. We then construct a family of approximating subsets $X_t$, with $X_{t}\subseteq X_{t'}$ for $t'>t$, and bounded density $\rho$, and consider only (a subset of) edges with $d(x,y)\le Q\min(t(x), t(y))$, where $t(x)\colonequals\inf\{t: x\in X_t\}$. In order to bound the number of simplices, we attribute each simplex to one of its vertices with minimal $t$.
Assuming $Q <\rho^{-1} 2^L$, the number of simplices attributed to each point $x$ is bounded above by $C^{Lk}\approx \left(Q\rho\right)^{k\log_2C} $,
because each relevant neighbor $y$ needs to have $t(y)\ge t(x)$ and hence $\rho^{-1}t(x)\le d(x,y)\le Qt(x)$.

The price we have to pay for this global interleaving is somewhat less flexibility in the choice of subsamples: It is trivial to find a $\rho=1$ subsample for fixed $t$ using $|X_t|\,|X|$ evaluations of the metric; we are, however, constrained to \emph{contraction trees}, that carry some additional structure. This yields $\rho=4$, in $|X|\log |X|$ metric evaluations.

\section{Sparse Approximations}
Fundamental to the approximation is the organization of the finite metric space $X$ as a rooted tree, such that the intricate details are at the lower levels, while the higher levels provide a coarse overview.
The structure we use specifically is that of a $Q$-contraction tree.
\begin{definition}
	An \emph{ordered rooted tree} on a finite set is an enumeration, i.e.~an ordering, $X=\{x_0,x_1,\ldots\}$ and a \emph{parent map}, which is a strictly decreasing function $\parent: \{1,\ldots,|X|\}\to \{0,\ldots,|X-1|\}$. In an abuse of notation, we also consider it as a map $\parent:X\setminus\{x_0\}\to X$, via $\parent x_n \colonequals x_{\parent n}$.
	
	We call $X_n=\{x_0,\ldots,x_n\}$ the truncation of the tree.
	The projection $\pi_n:X\to X_n$ onto the truncation is given by going up in the tree until we are in $X_n$, i.e.~by $\pi_n x=\uparrow^k \hspace{-0.3em}x$ with $k=\min\{m: \uparrow^m \hspace{-0.3em}x\in X_n\}$. 
	
	A \emph{contraction tree} is an ordered rooted tree additionally endowed with nonincreasing contraction times $t=t(x)\in[0,\infty]$ with $\infty=t_{0}>t_{1}\ge t_{2}\ge \ldots\ge 0$ for $t_{k}=t(x_{k})$. This defines the function $n(t)=\max\{k: t(x_k)\geq t\}$, and associated truncations $X_{n(t)}$ and $\pi_{n(t)}$.
	
	A contraction tree on a metric space must additionally fulfill the inequality $d(x,\pi_{n(t)}x)\le t$, for all $x\in X$ and $t\ge 0$. If we have the stronger estimate $d(x,\pi_{n(t)}x)\le Q^{-1}(t)$, for some nondecreasing function $Q:[0,\infty]\to[0,\infty]$ with $Q(t)\ge t$, then we call it an \emph{$Q$-contraction tree}, and $Q$ the metric precision function. 
\end{definition}

Given a contraction tree on $X$ and a symmetric weight function $\ell: X^2\to [0,\infty]$, let $V_\ell(n,r)=\left\{M\subseteq X_n:\,\ell(x,y)< r\,\,\,\forall x,y\in M\right\}$ be the restriction of the Rips-complex $V_\ell(r)$ to $X_n$. Hence, we have $V_\ell(|X|, r)=V_\ell(r)$, and $V_\ell(n-1,r)\subseteq V_\ell(n,r)$ and $V_\ell(n,r)\subseteq V_\ell(n,r')$ for $r\le r'$.

The construction of sparsified complexes will be in reverse order:
First, we construct a sparsified complex from a given $Q$-contraction tree. Then we discuss how to prescribe the metric precision function $Q$ in order to obtain desired relative and absolute errors, and how to get this $Q$-contraction tree from an $\mathrm{Id}$-contraction tree. Finally, we construct the $\mathrm{Id}$-contraction tree from a combinatorial tree, and this combinatorial tree from the metric space.

\subsection{The sparsified complex}
Consider the parameter $Q$ fixed. Our goal is to use an $Q$-contraction tree on a finite metric space $(X,d)$ in order to obtain a sparse approximation $\widetilde V(r)$ of the Rips complex $V_d(r)=\{(x,y)\in X^2: d(x,y)< r\}$, which is interleaved modulo simple homotopy equivalence.

The general idea is to construct a sparsified complex by definition of a length function $\ell(x,y)\in \{\infty, d(x,y)\}$, such that most edges will be removed from the complex by having length $\infty$, while the others keep their $d$-value. This 
construction will ensure that $V_\ell(|X|, r)\subsetsim V_\ell(n(r), r)$, and corresponds to the subsampled complex in Section \ref{sec:interleaving-gromov}. Next, we need an analogue of the multiplexed complex $\overline V$; this will be $V_{\overline \ell}$, which fulfills $V_{\overline\ell}(|X|, r)\subsetsim V_{\overline\ell}(n(r), r)=V_\ell(n(r),r)\supsetsim V_\ell(|X|, r)$, and otherwise attempts to minimize the length function $\overline\ell$. The crucial estimate to obtain the interleaving
\[
V_\ell(r)\subseteq V_d(r) \subseteq V_{\overline\ell}(\psi_Q(r))\simeq V_\ell(\psi_Q(r))
\]
is $\overline\ell(x,y)\le \psi_Q(d(x,y))$, which will be shown in Lemma \ref{lemma:interleave-Qpsi}. 

\begin{definition}[Sparsified and Implied Complexes]\label{def:sparse-complex}
	The sparsified complex $V_\ell$ is characterized by the subset of missing edges where $\ell(x,y)=\infty$; for all other edges, we have $\ell(x,y)=d(x,y)$. The implied complex $V_{\overline\ell}$ is defined by the implied length $\overline\ell$; it has $\ell,\overline\ell(x,y)=\ell(x,y)=d(x,y)$ for all non-missing edges. The two complexes are constructed recursively (by a depth-first dual tree traversal, see Section \ref{sec:impl-notes}): In order to assign $\ell,\overline\ell(x_i,x_j)$ with $i<j$, we need to know $\ell,\overline\ell(x_i,\parent x_j)$. We distinguish the following four cases:
	\begin{enumerate}[(a)]
		\item If $\ell(x_i,\parent x_j)=\infty$, then $\ell(x_i,x_j)=\infty$, and $\overline\ell(x_i,x_j)=\overline\ell(x_i,\parent x_j)$.
		\item If $\ell(x_i,\parent x_j)<\infty$, and $t(x_j)\leq d(x_i,\parent x_j)$, then  $\ell(x_i,x_j)=\infty$ and $\overline\ell(x_i,x_j)=d(x_i,\parent x_j)$.
		\item If $\ell(x_i,\parent x_j)<\infty$, and $d(x_i,\parent x_j)< t(x_j)< d(x_i,x_j)$, then $\ell(x_i,x_j)=\infty$ and $\overline\ell(x_i,x_j)=t(x_j)$.
		\item If none of the above apply, i.e.~if $\ell(x_i,\parent x_j)<\infty$ and $t(x_j)\geq\max(d(x_i,x_j), d(x_i,\parent x_j))$, then $\ell(x_i,x_j)=\overline\ell(x_i,x_j)=d(x_i,x_j)$.
	\end{enumerate}
	For the sake of this definition, we consider $\ell(x,x)=0$ to be always non-missing.
\end{definition}
By construction, $d\le \ell$ and $\overline\ell\le \ell$. Sparsity of $\ell$ comes from $t(x_j)\geq d(x_i,x_j)$ for all non-missing edges. By the condition $d(x,\parent x)\le t(x)$ for all $x\in X\setminus\{x_0\}$, it follows that edges $(x,\parent x)$ are always non-missing, and we can recursively enumerate all non-missing edges by starting from $(x_0,x_0)$ by traversing a tree.

\begin{lemma}
	The sparsified and implied complexes have the following properties:
	\begin{enumerate}
		\item Suppose $k>n(r)$. Then the inclusion $V_\ell(k-1,r)\subseteq V_\ell(k,r)$ is a single step deformation with $\pi(x,y)=\pi_{k-1}(x,y)\colonequals(\pi_{k-1}x,\pi_{k-1}y)$.
		\item Suppose $k>n(r)$. Then the inclusion $V_{\overline\ell}(k-1,r)\subseteq V_{\overline\ell}(k,r)$ is a single step deformation with $\pi(x,y)=\pi_{k-1}(x,y)\colonequals(\pi_{k-1}x,\pi_{k-1}y)$.
		\item The complexes $V_\ell(n(r), r)=V_{\overline \ell}(n(r), r)$ coincide.
	\end{enumerate}
\end{lemma}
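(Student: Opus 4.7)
The plan is to verify parts (1) and (2) by directly checking the two defining conditions of a single step deformation (simpliciality and contiguity to the identity), and then deduce part (3) by showing that the two flag complexes have the same edge set. The linchpin is a structural lemma comparing $\ell$ and $\overline\ell$.

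Before the main argument I would record two easy observations from Definition \ref{def:sparse-complex}: (i) only case (d) produces a finite value of $\ell$, so $\ell(x_i, x_j)<\infty$ always entails $\ell(x_i, x_j) = \overline\ell(x_i, x_j) = d(x_i, x_j)$; (ii) the parent edges $(x,\parent x)$ always fall in case (d) and satisfy $\ell(x,\parent x) = d(x,\parent x) \le t(x)$, by the contraction-tree property. I would then prove the key lemma by strong induction on $m = \max(i,j)$: whenever $\ell(x_i, x_j) = \infty$, one has $\overline\ell(x_i, x_j) \ge t(x_m)$. Taking WLOG $j = m$, cases (b) and (c) yield $\overline\ell(x_i, x_j) \ge t(x_j)$ immediately, while case (a) reduces the claim to $\overline\ell(x_i, \parent x_j) \ge t(x_{j'})$ with $j' < m$ the index of $\parent x_j$; combined with monotonicity of $t$, the inductive hypothesis closes the loop.

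For parts (1) and (2), $\pi_{k-1}$ fixes $X_{k-1}$ and sends $x_k$ to $\parent x_k$, so contiguity reduces to verifying that $\{x_k,\parent x_k\}$ lies in both complexes, which follows from observation (ii) and the hypothesis $t(x_k)<r$. Simpliciality reduces to checking edges $(y,x_k)$ with $y\in X_{k-1}$. In the $V_\ell$-case, $\ell(y,x_k)<r$ puts us in case (d), which builds in both $\ell(y,\parent x_k)<\infty$ and $d(y,\parent x_k)\le t(x_k)<r$, so $\ell(y,\parent x_k)=d(y,\parent x_k)<r$. In the $V_{\overline\ell}$-case, I would run through the four cases: case (a) propagates the bound directly; cases (b) and (d) give $\overline\ell(y,\parent x_k)=d(y,\parent x_k)$ bounded by $\overline\ell(y,x_k)$ or $t(x_k)$ respectively; case (c) gives $\overline\ell(y,\parent x_k)=d(y,\parent x_k)<t(x_k)=\overline\ell(y,x_k)<r$.

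For part (3), $V_\ell(n(r),r)$ and $V_{\overline\ell}(n(r),r)$ are flag complexes on $X_{n(r)}$, so it suffices to match their edges. One inclusion is free from $\overline\ell\le\ell$. For the other, assume $\overline\ell(x_i,x_j)<r$ with $i,j\le n(r)$: if $\ell(x_i,x_j)=\infty$, the key lemma and the definition of $n(r)$ give $\overline\ell(x_i,x_j)\ge t(x_{\max(i,j)})\ge t(x_{n(r)})\ge r$, a contradiction, so $\ell(x_i,x_j)<\infty$ and observation (i) yields $\ell(x_i,x_j)=\overline\ell(x_i,x_j)<r$. The main obstacle is formulating the structural lemma so the induction goes through cleanly; once in place, the remaining four-case analysis is essentially bookkeeping.
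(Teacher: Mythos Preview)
Your proposal is correct and follows essentially the same approach as the paper: both verify contiguity via the parent edge, check simpliciality for part~(1) by reading off case~(d), handle part~(2) by a case split, and prove part~(3) via the inductive claim that $\ell(x_i,x_j)=\infty$ forces $\overline\ell(x_i,x_j)\ge t(x_{\max(i,j)})$. Your version is slightly more explicit in that you isolate this inductive claim as a standalone lemma and run through all four cases in part~(2), whereas the paper dismisses cases~(a) and~(d) there in one line (case~(a) is not actually ``ruled out'' but is trivial since $\overline\ell(x_i,x_k)=\overline\ell(x_i,\parent x_k)$).
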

\begin{proof}
	Assume without loss of generality that $i<k$. 
	\begin{enumerate}
		\item The two complexes $V_{\ell}(k-1, r)\subseteq V_\ell(k,r)$ differ only by the single point $x_k$. We have already seen that $\ell(x_k,\parent x_k)=d(x_k,\parent x_k)\leq t(x_k)< r$; hence, the induced $\pi_{k-1}: V_\ell(k,r)\to V_{\ell}(k-1,r)$ is contiguous to the identity. In order to see that it is simplicial, let $i<k$ be such that $\ell(x_k,x_i)<r <\infty$. Then by construction (case (d)), $\ell(\parent x_k, x_i)=d(\parent x_k, x_i)\leq t(x_k)< r$.
		\item As in (1) the two complexes differ only by the single point $x_{k}$. Moreover, from the definition of $\ell$ and $\overline{\ell}$, we see that they differ only in the missing edges. Thus, contiguity to the identity follows as before, and for simpliciality we only have to consider the case $\overline \ell(x_k,x_i)<r <\ell(x_k,x_i)=\infty$. This rules out (d) and (a).
		If the case $(b)$ applies, then $\overline\ell(x_{i},\parent x_{k})\leq \ell (x_{i},\parent x_{k})=d(x_{i},\parent x_{k})=\overline\ell(x_{i}, x_{k})<r$.
		If the case $(c)$ applies, then $\overline\ell(x_{i},\parent x_{k})\leq \ell (x_{i},\parent x_{k})=d(x_{i},\parent x_{k})<t(x_{k})<r$.
		\item As $\overline\ell\leq \ell$, we have trivially $V_\ell(n(r), r)\subseteq V_{\overline \ell}(n(r), r)$.
		Thus, we have to show that $\overline\ell(x_i,x_j)<r$ implies $\ell(x_i,x_j)<r$ for all $x_i,x_j\in X_{n(r)}$, i.e. for all $x_i,x_j$ with $t_{i},t_{j}\geq r$. 
		This is equivalent to showing that the edge $(x_i,x_j)$ is non-missing
		if $ \overline\ell(x_{i},x_{j})<r\leq t_{j}\leq t_{i}$.
		Now, if $\ell(x_i,x_j)=\infty$ for $i<j$, then either (a), (b), or (c) apply. If either (b) or (c) applies, it follows directly that $\overline\ell(x_{i},x_{j})\geq t_j$. If (a) applies, then $\overline\ell(x_{i},x_{j})\geq t_j$ as well by induction and monotonicty of the $t_j$. Thus, the claim follows.
	\end{enumerate}
\end{proof}
The critical estimate to complete the interleaving is the following:
\begin{lemma}\label{lemma:interleave-Qpsi}
	Suppose that we have a $Q$-contraction tree, i.e., $d(x,\pi_{n(r)}x)\le Q^{-1}(r)$ for all $r\ge 0$ and $x\in X$.
	
	Then, for all $(z,y)\in X^2$, we have $d(z,y)\ge \overline\ell(z,y)-2Q^{-1}(\overline\ell(z,y))$. Therefore $V_{d}(r)\subseteq V_{\overline\ell}(\psi_Q(r))$, where $\psi_Q^{-1}= 1-2Q^{-1}$.
\end{lemma}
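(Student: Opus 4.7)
The plan is to unwind the recursion defining $\overline\ell$ in Definition~\ref{def:sparse-complex}, so that $\overline\ell(z,y)$ is expressed in terms of a distance or $t$-value at an ancestor pair $(u,v)$ of $(z,y)$, and then close the estimate with the triangle inequality plus the $Q$-contraction property. The inclusion $V_d(r)\subseteq V_{\overline\ell}(\psi_Q(r))$ is then a one-line rearrangement: $d(z,y)<r$ gives $\psi_Q^{-1}(\overline\ell(z,y))=\overline\ell(z,y)-2Q^{-1}(\overline\ell(z,y))<r$, so monotonicity of $\psi_Q$ yields $\overline\ell(z,y)<\psi_Q(r)$.

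I would iterate the recursion on pairs: starting from $(u_0,v_0)=(z,y)$ with $u_0<v_0$, set $v_{k+1}=\max(u_k,\parent v_k)$ and $u_{k+1}=\min(u_k,\parent v_k)$. Case (a) at step $k$, which propagates $\overline\ell(u_k,v_k)=\overline\ell(u_{k+1},v_{k+1})$, requires $\ell(u_{k+1},v_{k+1})=\infty$, so if the cascade terminates at some step $K\ge 1$, the terminating case must be (b) or (c) (case (d) would give $\ell(u_K,v_K)=d(u_K,v_K)<\infty$, contradicting the requirement for (a) one level down); the remaining option is $K=0$ with terminating case (d), for which $\overline\ell(z,y)=d(z,y)$ and nothing is to prove. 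In cases (b) and (c) the defining inequalities directly yield $t(v_K)\le\overline\ell(z,y)=:\gamma$. Crucially, $t(v_k)$ is nondecreasing in $k$: either $v_{k+1}=\parent v_k$ and $t$ is nonincreasing in the enumeration order, or $v_{k+1}=u_k$ with $t(u_k)\ge t(v_k)$ since $u_k<v_k$. Therefore $t(v_k)\le \gamma$ for every $k\le K$.

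Write the terminal pair, up to swapping roles, as $u_K=\parent^a z$ and $v_K=\parent^b y$ with $a,b\ge 0$. If $a\ge 1$, the ``last'' step walking up $z$'s lineage had $v_{k^{*}-1}=\parent^{a-1}z$, so $t(\parent^{a-1}z)\le\gamma$ by the monotonicity claim; applying the $Q$-contraction property at scale $r\downarrow t(\parent^{a-1}z)$ and using right-continuity of the generalized inverse $Q^{-1}$ gives $d(z,u_K)\le Q^{-1}(\gamma)$, and trivially $0$ if $a=0$. Analogously, $d(y,v_K)\le Q^{-1}(\gamma)$, and using $t(v_K)\le\gamma$ also $d(y,\parent v_K)\le Q^{-1}(\gamma)$. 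The triangle inequality now closes the bound: in case (b), $\overline\ell=d(u_K,\parent v_K)$, so $d(z,y)\ge d(u_K,\parent v_K)-d(z,u_K)-d(y,\parent v_K)\ge \gamma - 2Q^{-1}(\gamma)$; in case (c), $d(u_K,v_K)>t(v_K)=\gamma$, and the same triangle step applied with $v_K$ gives the same bound.

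The main obstacle is the zigzag structure of the recursion: because we always walk up the larger-index vertex, both lineages can get climbed, so $(u_K,v_K)$ is not simply $(z,\parent^b y)$ and both projection errors must be controlled. This is exactly what the monotonicity of $t(v_k)$ buys --- every walked-up vertex appears as some $v_k$ and therefore has $t$-value bounded by $\gamma$, making both projection errors uniformly bounded by $Q^{-1}(\gamma)$, which is where the factor $2$ in $2Q^{-1}(\overline\ell)$ comes from.
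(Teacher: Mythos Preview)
Your proof is correct and follows the same route as the paper: recurse through case~(a) to a terminal pair, which for $K\ge 1$ must land in case~(b) or~(c), then close with the triangle inequality and two applications of the $Q$-contraction bound. The paper phrases the unwinding more tersely by directly identifying the terminal pair as $(\pi_n x_i,\pi_n x_j)$ for a suitable truncation level $n$ and invoking $d(x,\pi_n x)\le Q^{-1}(t_n)$; your monotonicity observation $t(v_0)\le\cdots\le t(v_K)\le\gamma$ is exactly the content that justifies that identification and makes both projection errors uniformly $\le Q^{-1}(\gamma)$. One small slip in your case enumeration: at $K=0$ the terminating case can also be~(b) or~(c), not only~(d), but your subsequent argument (with $a=0$ on the appropriate side) already covers these, so this is cosmetic.
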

\begin{proof}
	Let without loss of generality $(x_i,x_j)\in X^{2}$ with $i<j$. We will consider the four cases in the definition of $\overline\ell$ in reverse order:
	\begin{enumerate}
		\item[(d)] As $\ell(x_i,x_j)=d(x_i,x_j)=\overline\ell(x_i,x_j)$, the claim is trivial.
		\item[(c)] In this case we have $d(x_i,x_j)>t_j=\overline\ell(x_i,x_j)$.
		\item[(b)] In this case we have $d(x_i,x_j)> d(x_i,\parent x_j)=\overline\ell(x_i,x_j)$.
		\item[(a)] This case is handled recursively: 
		For $x_{i},x_{j}\in X$ with $\ell(x_i,\parent x_j)=\infty$, find  $x_{k},x_{n}\in X$ such that $n>k$, $\pi_{n}x_j=x_n$ and $\pi_{n}x_i = x_k$, $\ell(x_k, x_n)=\infty$, and $\ell(x_k, \parent x_n)<\infty$.
		Then $\overline\ell(x_{i},x_{j})=\overline\ell(x_{k},x_{n})$
		and  
		\begin{enumerate}
			\item[(c)] either $\overline\ell(x_{k},x_{n})=t_{n}$, and
			\begin{align*}
			d(x_i,x_j)&\geq d(x_n,x_k) - d(x_i, x_k) - d(x_j,x_n)
			>t_n - 2Q^{-1}t_{n}\\
			&= \overline\ell(x_{i},x_{j}) - 2Q^{-1} \overline\ell(x_{i},x_{j}),
			\end{align*}
			\item[(b)] or $\overline\ell(x_{k},x_{n})=d(x_k,\parent x_n)\leq t_{n-1}$ (as $(x_k,\parent x_n)$ is non-missing), and
			\begin{align*}
			d(x_i,x_j)&\geq d(x_k,\parent x_n) - d(x_k,x_i) - d(x_j,\parent x_n)\\
			&\geq \overline\ell(x_{i},x_{j}) - Q^{-1}t_{n}- Q^{-1}t_{n-1}\\
			&\geq \overline\ell(x_{i},x_{j}) - 2Q^{-1} \overline\ell(x_{i},x_{j}).&&\qedhere
			\end{align*}
		\end{enumerate}
	\end{enumerate}
\end{proof}

\subsection{Relative error interleavings}\label{sec:relerrorInterleave}
Assume that the dataset $X$ is organized into a contraction tree with contraction times $\infty=r_0>r_1 \ge r_2 \ge \ldots$ and $d(x, \pi_n x)\le r_n$ for all $x\in X$. 
Lemma~\ref{lemma:interleave-Qpsi} permits us, in principle, almost arbitrary choices of the interleaving-$\psi$ and the metric approximation $Q$, as functions. A very natural choice are interleavings with prescribed relative errors, i.e.~of the form $\psi(r)=(1+\epsilon_1)r$. In order to obtain such an interleaving, we construct a $Q$-contraction tree by setting contraction times $t_n = Q(r_n)$, where $Q(r)=(2+2\epsilon_1^{-1})r$. In view of Lemma~\ref{lemma:interleave-Qpsi}, this yields the estimate $d(x,y)\ge (1 - 2(2+2\epsilon_1^{-1})^{-1} )\overline \ell(x,y) = (1+\epsilon_1)^{-1}\overline\ell(x,y)$, and hence $\overline\ell\le (1+\epsilon_1)d$, which is the desired estimate. 

As before, the interleaving error actually cuts off at the radius: For $r>R\colonequals r_1\ge \max\{d(x_0,x):\, x\in X\}$, the complex $V_d(r)$ is contractible. This is also the case for $V_\ell$: We know that $\overline\ell(x_0,x)\le R$ for all $x$, by considering cases $(a)-(d)$. Hence, we actually have the better interleaving estimate $\psi(r)=\max(R, (1+\epsilon_1)r)$.

If data is plentiful, it is often necessary to further restrict the number of samples: That is, we truncate at some $N<|X|$, in addition to the multiplicative error $\epsilon_1$. This means, we set $t_n=(2+2\epsilon_1^{-1})r_n$ for $n\le N$, and $t_n=r_n$ for $n\ge N$, and hence construct a $Q$-contraction-tree with 
\[Q^{-1}(r) = \left\{\begin{array}{ll}
(2+2\epsilon_1^{-1})^{-1}r,\quad & \text{if } (2+2\epsilon_1^{-1})r_{N+1}\le r\\
r_{N+1},\quad & \text{if } r_{N+1}\le r \le (2+2\epsilon_1^{-1}) r_{N+1}\\
r,\quad & \text{if } r\le r_{N+1} .
\end{array}\right.
\]
Now, applying Lemma~\ref{lemma:interleave-Qpsi} with a cut-off at $d\ge 0$, using the shorthand $\epsilon_0\colonequals 2r_{N+1}$, we obtain the estimate $d(x,y)\ge \psi^{-1}(\overline\ell(x,y))$, where
\[\psi^{-1}(r) = \left\{\begin{array}{ll}
(1-(1+\epsilon_1^{-1})^{-1})r,\quad& \text{if } (1+\epsilon_1^{-1})\epsilon_0\le t\\
r-\epsilon_0,\quad& \text{if } \epsilon_0\le r \le (1+\epsilon_1^{-1}) \epsilon_0\\
0 ,\quad& \text{if } r\le \epsilon_0,
\end{array}\right.
\]
and hence $V_{d}(r)\subseteq V_{\overline\ell}(\psi(r))$ with 
\[\psi(r) = \left\{\begin{array}{ll}
r+ \epsilon_1r,\quad& \text{if } \epsilon_0\epsilon_1^{-1}\le r\\
r+ \epsilon_0,\quad& \text{if } r \le \epsilon_0\epsilon_1^{-1}.
\end{array}\right.
\]  
In other words, and combining with the previous discussion, we obtain the estimate with
\begin{equation}\label{eq:interleave-rel-abs-err}
\psi(r)= \min(R, r + \max(\epsilon_0 , \epsilon_1 r)),\quad \epsilon_0=2r_{N+1}.
\end{equation}
We can then consider the truncated $V_\ell(N, r)$ only, and obtain the above interleaving.

If we aim for a different desired precision $\psi$, then we can obtain the required $Q^{-1}$ from Lemma \ref{lemma:interleave-Qpsi}. Analogous to the above, an $\mathrm{id}$-contraction tree with times $\{r_n\}$ can be turned into a $Q$-contraction-tree with times $\{ t_n \}$ by setting $t_n=Q(r_n)$, i.e. by relabeling.

\subsection{Contraction trees via simplified cover-trees}
We have shown how a contraction tree (with $Q=\mathrm{id}$) on a finite metric space $X$ yields a sparsified approximate Rips-complex that is interleaved with arbitrary desired precision $\psi$. In Section \ref{sec:sparsi-scale}, we argued upper bounds for the number of simplices in the sparsified complex for desired precisions of the form \eqref{eq:interleave-rel-abs-err}. These estimates required that the metric space has bounded density. For contraction trees we define this property as follows.
\begin{definition}
	Let $X$ be a contraction tree on a finite metric space. We say that $X$ has \emph{density} bounded above by $\rho=\rho(X)$, if $d(x_i,x_j)\ge \rho^{-1} t_j$ for all $i<j$.
\end{definition}
For simplicity, we only consider constant $\rho$ instead of general functions $\rho(r)$.

In order to obtain the desired sparsification, we need to construct contraction trees with bounded density. The well-known cover-tree \cite{beygelzimer2006cover} is such a tree with $\rho=4$, and can be directly used. 

One can obtain a contraction tree with $\rho\le 4$ also by the following simplified variant of the cover-tree, which we shall introduce for the sake of self-containedness and ease of implementation. The simplified cover-tree fulfills $d(x,\parent x)\le \frac{1}{2}d(\parent x, \grandparent x)$ for all points. By the geometric series, we obtain a contraction tree by setting $r(x)=2d(x,\parent x)$, using the convention that $d(x_0,\parent x_0)=\infty$.

We can ensure a density of $\rho\le 4$, i.e.~$d(x,y)\ge \min(r(x),r(y))/4$ for all $x\neq y$, by constructing the tree sequentially, in the following way: When inserting a new point $x$ into a tree $X$, we choose its parent \[\parent x = p\colonequals \argmin_{y\in X}\{d(x,y): d(x,y)\le d(y,\parent  y)/2\},\]
and set $r(y)\colonequals 2d(y,p)$. For non-empty $X$, the right hand side always contains $x_0$ and is therefore non-empty. Now, suppose that the density bound was violated after insertion of $x$; then there exists a $y\in X$, such that $2 d(x,y)\le \min(d(x,p), d(y,\parent y))$. Now $2d(x,y)\le d(y,\parent y)$ means that $y$ is a valid candidate for the minimization, and hence $d(x,y)\ge d(x,p)$; this contradicts the assumption, and by induction we always construct a tree with density $\rho\le 4$.

In order to insert a point $x$ into $X$, we need to search the current tree. The search-tree can be aggressively pruned, because $x$ can never be inserted below $y$, i.e., we cannot have $x\preceq y$, if $d(x,\parent y)> r(y)$. In other words, we obtain the following algorithm for finding the minimizing parent of a new point $x$:
\begin{enumerate}
	\item Initialize current optimal solution as $(x^*,d^*)=(\texttt{null}, \infty)$. Initialize the candidate set as $C=\{x_0\}$.
	\item If $C=\emptyset$, then return $(x^*,d^*)$. Otherwise, take some point $y\in C$. This is the current candidate. Compute $d=d(x,y)$. If $d\le r(y)$ and $d<d^*$, then update $(x^*,d^*)= (y,d)$.
	\item Consider all children $(c,r(c))$ of $y$, in descending order of $r(c)$. If $d\le r(c)$, then add $c$ to the candidate set $C$; otherwise, or if all children have been processed, go to step $(2)$.
\end{enumerate}

This algorithm needs asymptotic $N\log N$ distance computations in order to organize a finite metric space $X$ with $N=|X|$ points into a tree, as long as $X$ has bounded doubling dimension. We recommend \cite{beygelzimer2006cover} for a more detailed discussion of the asymptotic run-time. In practice, sequential insertion into simplified cover-trees is fast enough to be negligible compared to the homology computation. Faster construction algorithms exist and are discussed in \cite{beygelzimer2006cover}, but are not in the scope of this paper.

There is a simple but very powerful optimization that is possible on contraction trees: We can replace the a priori bounds $r(x)$ by a posteriori bounds $\orad(x)$. In order to do this, we forget the ordering and the $r_n$, and only keep the combinatorial rooted tree. Then we can naturally construct an $\mathrm{id}$-contraction tree in the following way:
\begin{enumerate}
	\item For each non-root, set $R_0(x)=\max\{d(y,\parent x): y \preceq x\}$, where $y\preceq x$ (``$y$ is descendant of $x$'') if either $y=x$ or if $x$ lies on the unique path between $y$ and $z$ in the tree , i.e.~in the graph with edges $\{(z,\parent z)\}$.
	\item We set $\orad(x)=\max\{R_0(y): y\preceq x\}$ for each $x\neq x_0$, and set $\orad(x_0)=\infty$.
	\item We order the set of points by $\orad(x)$. In cases of ties, we order by demanding that $\parent x_j = x_i$ implies $j>i$; that is, we extend the partial order $\preceq$ in an arbitrary way.
\end{enumerate}
It is clear that the resulting $\orad(x)\le r(x)$, i.e.,~we have tightened the estimates and decreased the density.

\subsection{Implementation notes}\label{sec:impl-notes}
We will now walk through the entire construction, this time with focus on the implementation. 
\begin{enumerate}
	\item We organize our sample set $X$ in a contraction tree. We do this by sequential construction, using $r(x)=2d(x,\parent x)$, as described in the previous section. In order to run this algorithm, each node in the tree needs to store both the associated point in the metric space, and a sorted (descending by $r$) linked list of children.
	\item We compute the a posteriori distances $\orad$, and sort the vertices. If desired, we truncate to some $0<N\le |X|$; then, we need to store $\epsilon_0=2r_{N+1}$ as the incurred error. Note the discontinuity in the error estimate: For $N=|X|$, we effectively have $r_{N+1}=0$, and the truncation to any $k<N$ is expensive, i.e.~incurs significant errors. On the other hand, in many applications $X$ is conceptually a finite approximation of an infinite space $M$. Then, $\epsilon_0$ can also serve as an estimate of the error incurred by considering $X$ instead of $M$. A detailed statistical analysis is beyond the scope of this work, but if $X$ is a finite random sample out of some distribution in a metric space, then we recommend keeping less than $1/2$ of the sample points, at least if $\epsilon_0$ intended to serve as such an estimate and data is plentiful.
	\item We construct the sparsified complex. Suppose the desired precision $\epsilon_1$ is given. We define $Q$ as in Section~\ref{sec:relerrorInterleave}.
	The set of pairs $(x_i,x_j)$ with $i\le j$ forms a tree (called the dual tree), rooted at $(x_0,x_0)$, with the parent map $\parent (x_i,x_j)= \texttt{sort}(x_i,\parent x_j)$. We traverse this tree, checking at each edge whether Definition~\ref{def:sparse-complex} $(b),(c),(d)$ applies (with respect to the relabeling); if $(d)$ applies, we accept the edge and continue the tree traversal. There is no need to ever construct $\overline\ell$; it serves theoretical purposes only. 
\end{enumerate}

The resulting structure is a sparse symmetric matrix for $\ell$, with implicit zeros on the diagonal, implicit $\infty$ on all other missing positions, and $d(x_i,x_j)$ on all present positions. In order to speed up later computations, we can now further truncate this matrix. 

This sparse matrix is the input to a tool for persistent homology computation. We modified the popular and fast \texttt{ripser} tool in order to accept sparse input matrices. We remark that, due to details of the internal representation of simplices, \texttt{ripser} is limited to dimensions $d$ and sample sizes $N=|X|$ such that $\binom{N}{d+2}< 2^{64}$.

Afterwards, the output of \texttt{ripser} needs to be interpreted. Statistical analyses of the output are beyond the scope of this paper; we recommend the visualization described in Section \ref{sec:approx-pers}.

\texttt{ripser} does not support zig-zag filtrations. If one is using software with such support, it might be worthwhile to consider the filtration $V_\ell(n(r),r)$ instead of $V_\ell(0,r)$. 

\section{Computational Example}

\paragraph{The solenoid}
As an example, consider the solenoid (see Figure \ref{fig:solenoid}). We construct the solenoid in the following way: Consider the map $\Phi: \RR/\ZZ \times [-1.5,1.5]^2 \to\RR/\ZZ \times [-1.5,1.5]^2 $, given by $(\phi,x,z)\to \left(2\phi, x/3+\cos(2\pi\phi), z/3+\sin(2\pi\phi)\right)$. We iterate this map, and consider the attractor $\mathcal S=\bigcap_n \ran\Phi^n$. As an alternative view, we take a continuous injective map from the full torus into itself, given by $\Phi: S^1\times [-1.5,1.5]^2\to S^1\times[-1.5,1.5]^2$, which winds around twice, and is contracting in the $(x,z)$-plane. This attractor is called the solenoid, and is a well-known fractal: It is connected, but not path-connected, and every slice $\phi=\mathrm{const}$ is a cantor-set. We can embed the solenoid into $\RR^3$ by rotating it around the $z$-axis, e.g.~by embedding $\psi:(\phi,x,z)\to (\cos(2\pi\phi)(1+x/3), \sin(2\pi\phi)(1+x/3) , z)$. We visualize the construction in Figure \ref{fig:solenoid}, by taking a curve with increasing angle $\phi$, and random $(x,z)$-coordinates. We then apply the $\Phi$-map multiple times, and embed the resulting curve into $\RR^3$ via $\psi$. This yields a better visualization than a pure scatterplot of sample points.

What persistent topology do we expect? At very large scales, the solenoid will be a point, just like all metric spaces of bounded diameter; if we consider a finite sample set, it will be a cloud of isolated points at very small scales. As we zoom out, from very close to far scales, we will mostly see an $S^1$ with a little bit of short-lived noise. But the inclusion maps are actually described by $x\to 2x$ on first homology: A homology class that winds once around $\ran\phi^{n+1}$ corresponds to a class that winds twice around $\ran\phi^n$. That is, the maps of persistent homology over scales should recover the construction of the solenoid. 

What does this mean for persistence diagrams? If we consider persistent homology over $\ZZ_2$ (or any field of characteristic 2), then the inclusions induce the zero-map on the first homology, and we therefore expect a sequence of touching medium-length bars, plus topological noise. If we instead consider the question over $\ZZ_3$ (or any field that does not have characteristic 2), then we expect one very long-lived one-dimensional homology class, plus noise. The zeroth homology is expected to consist of short-linved noise, near the density of our sample set, and one class (one connected component) persisting forever. This intuitive description matches with the computed persistence diagram in $H^1$ of a large sample set, c.f.~Figure \ref{fig:solenoid-pers}.

\begin{figure}[hbpt]
  \centering
         \begin{subfigure}[b]{0.35\textwidth}
                 \centering
                 \includegraphics[width=\textwidth]{./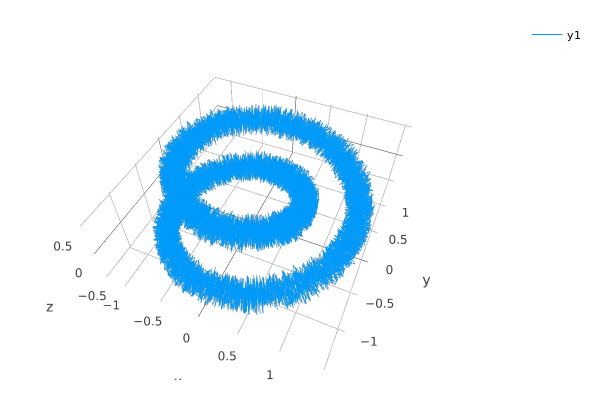}
                 \caption{One iteration}
         \end{subfigure}~~%
         \begin{subfigure}[b]{0.35\textwidth}
          \centering
          \includegraphics[width=\textwidth]{./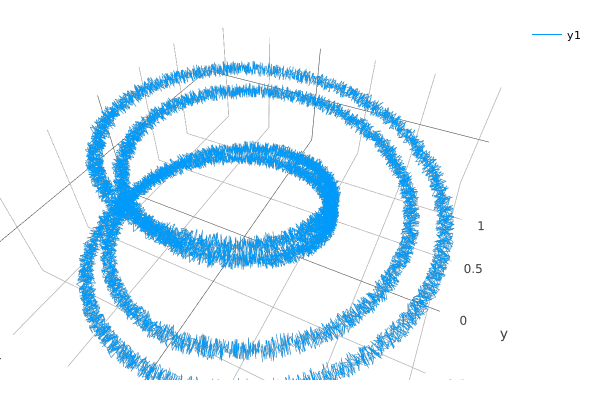}
          \caption{Two iterations}
  \end{subfigure}~~%
  \begin{subfigure}[b]{0.35\textwidth}
    \centering
    \includegraphics[width=\textwidth]{./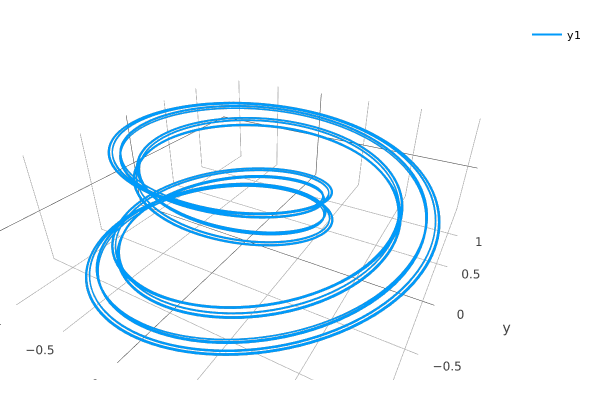}
    \caption{The attractor, many iterations}
\end{subfigure}~~%
         \caption{The solenoid, in several iterations}\label{fig:solenoid}
\end{figure}

\begin{figure}[hbpt]
  \centering
         \begin{subfigure}[b]{0.55\textwidth}
                 \centering
                 \includegraphics[width=\textwidth]{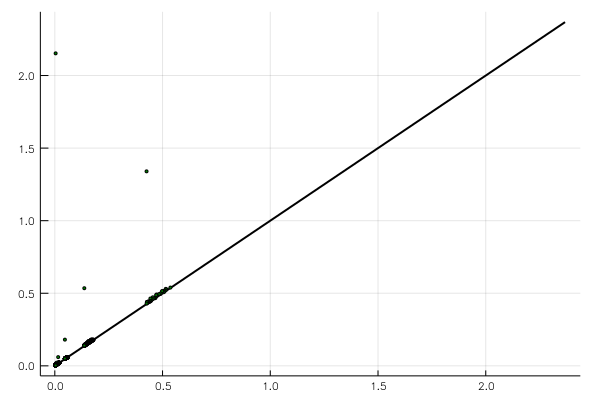}
                 \caption{Diagram over $\ZZ_2$}
         \end{subfigure}~~%
         \begin{subfigure}[b]{0.55\textwidth}
          \centering
          \includegraphics[width=\textwidth]{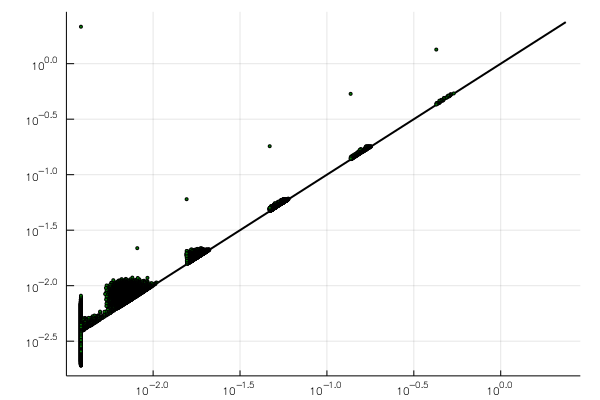}
          \caption{Diagram over $\ZZ_2$, log scale}
        \end{subfigure}\\%
  \begin{subfigure}[b]{0.55\textwidth}
    \centering
    \includegraphics[width=\textwidth]{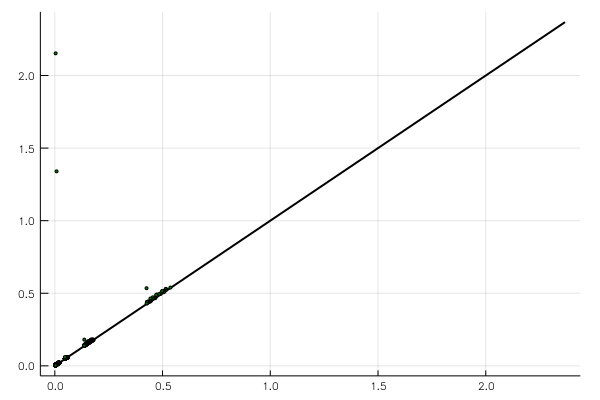}
    \caption{Diagram over $\ZZ_3$}
\end{subfigure}~~%
\begin{subfigure}[b]{0.55\textwidth}
  \centering
  \includegraphics[width=\textwidth]{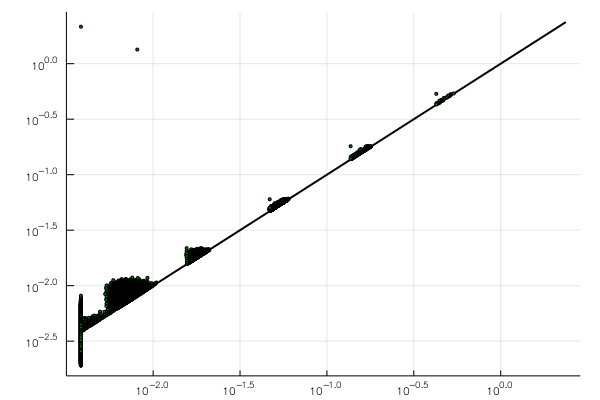}
  \caption{Diagram over $\ZZ_3$, log scale}
\end{subfigure}~~%
         \caption{Persistence diagrams of $H^0$ and $H^1$ of the solenoid (sparsified). Very small scales were truncated for the logarithmic plot; the dots below the diagonal are artifacts}\label{fig:solenoid-pers}
\end{figure}

\paragraph{Approximate persistence diagrams}
From viewing the depicted persistence diagram Figure \ref{fig:solenoid-pers}, it is clear that logarithmic axis scaling is often preferable to absolute axis scaling. However, any practicioner should immediately become doubtful at this diagram: calculations to describe the change of topology over that many orders of magnitude is very hard, if not impossible without a sparsification scheme.

Indeed, Figure \ref{fig:solenoid-pers} does not truly depict the persistence of a very large sample set in the solenoid, but the persistence diagram of an interleaved complex. We took two million random sample points on the solenoid, and then selected a small subsample that is close in Hausdorff-distance $\epsilon_{0}$ to the entire sample set.
This gives a lower bound on the Hausdorff distance to true (uncountably infinite but compact) solenoid, which we also use as a naive estimate of said distance.

We employed two different approximation strategies: The naive strategy (subsampling without threshholding) takes the $1500$ most significant sample points, and obtains an interleaving with $\psi(r)=\min(R,r+2\epsilon_{0})$. With the sparsification strategy developed in this paper, we can use an order of magnitude more, namely $64000$, most significant sample points, and a planned relative error of $\epsilon_1=0.25$, for a resulting interleaving with $\psi(r)=\min(R, r+\max(\epsilon_0, r\epsilon_1))$. Since we have more sample points than in the naive subsample, we can achieve a better $\epsilon_0$, while sacrificing some precision at large $r$. Both computations took comparable amounts of memory, and hence illustrate the trade-off. 

These diagrams also serve to illustrate approximate diagrams. Each entry $(b,d)$ of the persistence diagram of the interleaved complex corresponds to the rectangle $((\psi^{-1}(b),b),(\psi^{-1}(d),d))$, hence sitting at the upper right corner. Such a rectangle definitely corresponds to an actual entry of the true persistence diagram (sitting inside the rectangle), if its lower left corner is below the diagonal (drawn in red), i.e.~if its its upper right corner is above the graph of $\psi$ (drawn in red). These rectangles have been drawn in blue; the other ones are drawn in orange. 

Each entry of the true persistence diagram corresponds to a rectangle (containing the true dot), of either color, if the true entry is below the graph of $\psi$. In other words, the region between the diagonal and $\psi$ is the region of the unknown.

In order to better illustrate the trade-off, we have, in each diagram, also plotted the $\psi$-function of the other approximation. The truncation to the left side of some of the figures is purely a plotting artifact: Both the graphs and the rectangles stretch to $b=0$.

\begin{figure}[hbpt]
  \centering
         \begin{subfigure}[b]{0.5\textwidth}
                 \centering
                 \includegraphics[width=\textwidth]{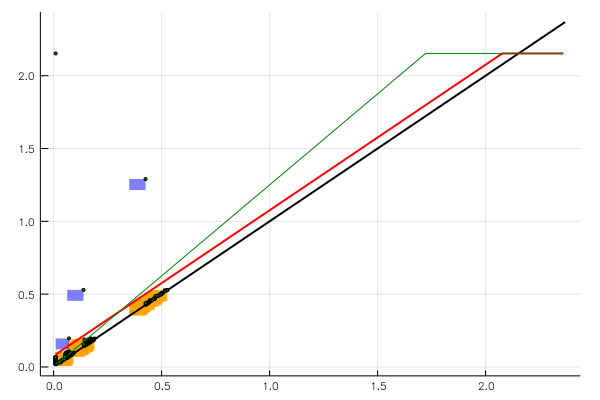}
                 \caption{Subsampled solenoid}
         \end{subfigure}~~%
         \begin{subfigure}[b]{0.5\textwidth}
          \centering
          \includegraphics[width=\textwidth]{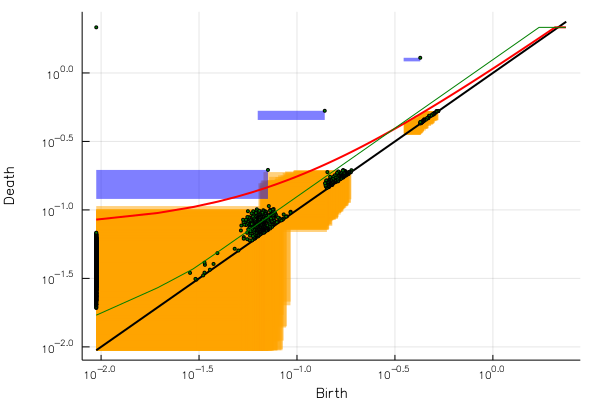}
          \caption{Subsampled solenoid, log scale}
        \end{subfigure}\\%
  \begin{subfigure}[b]{0.5\textwidth}
    \centering
    \includegraphics[width=\textwidth]{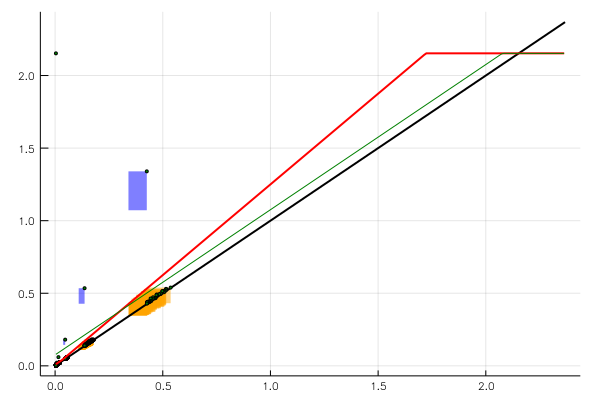}
    \caption{Subsampled and sparsified solenoid}
\end{subfigure}~~%
\begin{subfigure}[b]{0.5\textwidth}
  \centering
  \includegraphics[width=\textwidth]{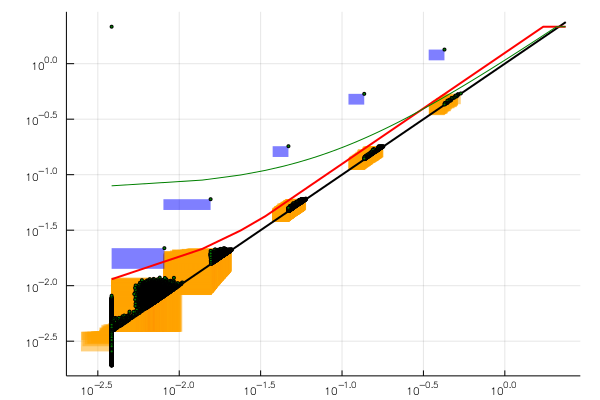}
  \caption{Subsampled and sparsified solenoid, log scale}
\end{subfigure}~~%
         \caption{Persistence diagrams of $H^1(\ZZ_2)$ of the solenoid, with error bars. The identity is in black, the respective $\psi$ is in red and the green line represents the alternative $\psi$, in order to visualize the tradeoff. Rectangles are error bars around features; they are colored blue for definitely present features, and orange for possibly present features.}\label{fig:solenoid-pers2}
\end{figure}

\FloatBarrier

\bibliography{./litlist}
\bibliographystyle{alpha}%

\appendix

\section{Proof of the Classification Theorem for Persistence Modules}\label{app:classify}

We prove Theorem~\ref{T:classification} by construction of a normal form basis $\{\xi^{d,b}_{c,i}\}$:
Let $\{V(c)\}_{c\in\mathbb{Z}}$ be a sequence of finite-dimensional $\mathbb{F}$ vector spaces, such that only finitely many $V(c)\neq \{0\}$, and let $\phi^{c+1,c}: V(c)\to V(c+1)$ be linear maps.

A normal form basis can be obtained by the following procedure, where we keep updating a partial normal form collection $M\subseteq \{\xi^{d,b}_{c,i}\}$ of linearly independent vectors:
\begin{enumerate}
	\item We iterate from $b=-\infty$ to $b=+\infty$.
	\item At index $b$, we iterate from $d=b+1$ until $d=\infty$.
	\item At index $b<d$, we look for a vector $\xi\in \ker \phi^{d, b+1}\subseteq V({b+1})$, which is not part of the span of the current collection $M$. If no such vector exists, we continue with the next $d$.
	\item If we have found such a vector $\xi\in \ker \phi^{d, b+1}$, we add $\{\xi^{d,b}_c = \phi^{c,b+1}\xi: b<c\le d\}$ to our collection.
\end{enumerate}
Properties $(1)$ and $(2)$ hold by construction.
Furthermore, we cannot have created a linear dependency in spaces after $d$. If $\phi^{c,b+1}\xi$ was linearly dependent on $M$, for some $c\le d$, we would find some nonzero $\zeta_{b+1}\in V({b+1})$ in the current span of $M$, such that $\phi^{c,b+1}(\zeta_{b+1}-\xi)=0$.
As we have ensured that $M$ spans $\ker\phi^{c,b+1}$ for all $c\le d+1$, this cannot be. Thus, property~$(1)$ holds.

Note that all the iterations are, in fact, finite, because we assumed that only finitely many $V(c)\neq\{0\}$.
We can run the same procedure for persistence modules $V(r)$ indexed over $r\in\RR$, as long as $V(r)$ changes at only finitely many scales $r$.
\section{The Interleaving Theorem}\label{app:interleave}

Let $V$ and $W$ be two persistence modules, parametrized over $\ZZ$, with only finitely many non-zero entries and all spaces finite-dimensional.
Assume that $V$ and $W$ are interleaved with respect to two nondecreasing functions $\psi_1,\psi_2$, i.e., there exist commuting linear maps $\phi_{W,V}^{\psi_1(t),t}:V(t)\to W(\psi_1(t))$ and $\phi_{V,W}^{\psi_2(t),t}:W(t)\to V(\psi_2(t))$. 

We consider the persistence diagrams $P_V$ and $P_W$ of $V$ and $W$ as collections in $\ZZ^2$ with multiplicity.
We say that:
\begin{enumerate}
	\item Two pairs $(b,d]_V\in P_V$ and $(\hat b,\hat d]_W\in P_W$ are \emph{related}, written as $(b,d]_V\sim (\hat b,\hat d]_W$, if $\hat b< \psi_1(b+1)$, $b< \psi_2(\hat b+1)$, $\psi_2(\hat d)\ge d$, and $\hat d\le \psi_1(d)$ all hold.
	\item A pair $(b,d]_V\in P_V$ is called alive if $\psi_1(\psi_2(b+1))\le d$. Likewise, a pair $(\hat b,\hat d]_W\in P_W$ is called alive if $\psi_2(\psi_1(\hat b+1))\le \hat d$.
\end{enumerate}
We will show how to construct a matching between $P_{V}$ and $P_{W}$ that only matches related entries and uses all living entries in $P_V$ and $P_W$ simultaneously.

First, we construct a partially defined injective map $M_{W,V}:P_V\to P_W$, such that
\begin{enumerate}
	\item all living $x\in P_V$ are matched, i.e., $M_{W,V}(x)\in P_W$ is well-defined,
	\item we only match related pairs, i.e., $M_{W,V}(x)\sim x$.
\end{enumerate}
This is done by the following procedure, where we construct the matching iteratively over increasing $b$, and adjust the persistence modules $V,W$ in every step:
\begin{enumerate}
	\item Suppose the iteration has arrived at $V(b)=\{0\}\neq V(b+1)$. If instead $V(b)=\{0\}$ for all $b\in\ZZ$, then we are done.
	\item Choose any $\xi=\xi^{d,b}_{b+1,i}\in V(b+1)$, and set $\zeta=\phi_{W,V}^{\psi_1(b+1), b+1}\xi \in W(\psi_1(b+1))$.
	\item If $\zeta=0$, jump to step $5$.
	\item Let $\hat\zeta\in W(\hat b+1)$ an earliest preimage of $\zeta$ in $W$ with lifetime $(\hat b, \hat d]$, i.e.,
	\begin{align*}
	\hat b&= \argmin_{\hat b}\{\exists \hat\zeta \in W(\hat b+1)\ \textrm{such that }\phi_{W,W}^{\psi_1(b+1), \hat b+1}\hat\zeta=\zeta\},\\
	\hat d&= \argmin_{\hat d}\{\phi_{W,W}^{\hat d+1,\psi_{1}(b+1)}\zeta = 0\},
	\end{align*}
	and add the edge $(b,d]_V\to (\hat b,\hat d]_W$ to the matching $M_{W,V}$.
	\item We update $W$ and $V$ by by factoring out the images of $\xi$ and $\zeta$, i.e., by going to
	\begin{align*}
	\widetilde V(k)&\colonequals V(k) / \vspan\{\xi^{d,b}_{k,i}\} &&\textrm{ for }b< k\le d,\\
	\widetilde W(k)&\colonequals W(k) / \vspan\{\phi_{W,W}^{k,\psi_1(b+1)}\xi^{d,b}_{b+1,i}\} &&\textrm{ for }\psi_1(b+1)\le k\le \hat d.
	\end{align*}
	It is clear that we still get have interleaved persistence diagrams $P_{\widetilde V}$ and $P_{\widetilde W}$, i.e., all the maps on the factor spaces are well-defined commute.
	
	$P_{\widetilde V}$ is obtained from $P_{V}$ by removing one $(b,d]_V$-entry. $P_{\widetilde W}$ is obtained from $P_W$ by removing the matched $(\hat b, \hat d]_W$-entry, and inserting a $(\hat b, \psi_1(b+1)-1]_W$-entry. This new entry is spurious; but, since we iterate in increasing $b$, it will never be used in the matching. We continue with Step $2$.
\end{enumerate}
The obtained map $M_{W,V}$ has indeed the desired properties:
\begin{enumerate}
	\item If $\xi$ is alive, i.e., $\psi_1(\psi_2(b+1))\le d$, then
	$\zeta\neq 0$, because
	\begin{align*}
	\phi_{V,W}^{\psi_2(\psi_1(b+1)),\psi_1(b+1)}\zeta &= \phi_{V,W}^{\psi_2(\psi_1(b+1)),\psi_1(b+1)} \phi_{W,V}^{\psi_1(b+1),b+1}\xi\\
	& = \phi_{V,V}^{\psi_2(\psi_1(b+1),b+1}\xi \neq 0.
	\end{align*}
	\item It holds that $(b,d]_V\sim (\hat b,\hat d]_W$:
	\begin{enumerate}
		\item Obviously $\hat b < \psi_1(b+1)$.
		\item Suppose that $b\ge \psi_2(\hat b+1)$. Then we can set \[\phi_{V,W}^{b,\hat b + 1}\colonequals \phi_{V,V}^{b,\psi_2(\hat b+1)}\circ  \phi_{V,W}^{\psi_2(\hat b+1), \hat b + 1}: W(\hat b +1)\to V(b),\] and this commutes such that
		\begin{align*}
		\phi_{W,V}^{\psi_1(b+1),b+1}\phi_{V,V}^{b+1,b}\phi_{V,W}^{b,\hat b+1}\hat\zeta= \phi_{W,W}^{\psi_1(b+1), \hat b+1}\hat\zeta = \zeta.
		\end{align*}
		As $\zeta\neq 0$, this contradicts the induction assumption $V(b)=\{0\}$.
		\item Suppose that $\hat d \ge \psi_1(d+1)$. This cannot be, since 
		\[\phi_{W,W}^{\hat d, \psi_1(d+1)}\phi_{W,V}^{\psi_1(d+1),d+1}\phi_{V,V}^{d+1,b+1}\xi=0=\phi_{W,W}^{\hat d,\psi_1(b+1)}\zeta.\]
		\item Suppose that $d \ge \psi_2(\hat d+1)$. Then $\phi_{V,V}^{\psi_2(\hat d+1),b+1}\xi\neq 0$.  This cannot be, since by construction $\hat d+1\ge\psi_1(b+1)$, and 
		\begin{align*}
		\phi_{V,V}^{\psi_2(\hat d+1),b+1}\xi
		&=\phi_{V,W}^{\psi_2(\hat d+1),\hat d+1}\phi_{W,W}^{\hat d+1,\psi_1(b+1)}\zeta
		=0.
		\end{align*}
	\end{enumerate}
\end{enumerate}

Using the same procedure, we can construct $M_{V,W}: P_{W}\to P_{V}$ that only matches related entries and is defined for all living $(\hat b,\hat d]_W\in P_W$.

Finally, we can construct the matching $M$:
We have a directed bipartite graph $M_{W,V}\cup M_{V,W}$. Every entry has at most one outgoing edge and at most one incoming edge. 
This means that this graph decomposes into directed pathes and cycles. For each cycle, we add every second edge to $M$ (cycles are even). For each path, we add every second edge, starting with the first one in directed order.

\end{document}